\documentclass[conference]{IEEEtran}
\IEEEoverridecommandlockouts
\usepackage{cite}
\usepackage{amsmath,amssymb,amsfonts}
\usepackage{algorithmic}
\usepackage{graphicx}
\usepackage{textcomp}
\usepackage{xcolor}
\usepackage{amsthm}
\usepackage{physics}
\usepackage[colorlinks=true,urlcolor=teal,
            linkcolor=teal,citecolor=teal]{hyperref}
\usepackage{xspace}
\usepackage{dsfont}
\usepackage{csquotes}
\usepackage{listings}
\usepackage{xcolor}
\usepackage[T1]{fontenc}
\usepackage[lining]{FiraMono}
\usepackage{subcaption}
\usepackage{tikz}
\usetikzlibrary{positioning,fit,arrows.meta,calc,shadings,shapes.multipart, backgrounds}
\usepackage{layouts}
\usepackage{transparent}
\usepackage{dashrule}
\usepackage{placeins}

\usepackage[british]{babel}

\usepackage{censor}
\newboolean{anonymous}
\setboolean{anonymous}{false}
\ifbool{anonymous}{}{\renewcommand{\censor}[1]{#1}}
\ifbool{anonymous}{}{\renewcommand{\blackout}[1]{#1}}
\ifbool{anonymous}{\newcommand{\genemail}[2]{\href{mailto:xxx.xxx@xx.xx}{\blackout{#2}}}}{\newcommand{\genemail}[2]{\href{#1}{#2}}}

\newcommand{\eg}{\emph{e.g.},\xspace}

\newcommand{\etal}{\emph{et al.}\xspace}

\renewcommand{\bold}[1]{\boldsymbol{#1}}

\let\oldtexttt\texttt
\renewcommand{\texttt}[1]{{\small\oldtexttt{#1}}}

\definecolor{lfd1}{HTML}{FFFFFF} 
\definecolor{lfd2}{HTML}{E69F00}
\definecolor{lfd3}{HTML}{999999}
\definecolor{lfd4}{HTML}{009371}

\definecolor{lfd5}{HTML}{BEAED4}
\definecolor{lfd6}{HTML}{ED665A}
\definecolor{lfd7}{HTML}{1F78B4}

\definecolor{lfd2light}{HTML}{e7d6b3}
\definecolor{lfd3light}{HTML}{cecece}
\definecolor{lfd4light}{HTML}{acd0c9}
\definecolor{lfd5light}{HTML}{cbbddd}
\definecolor{lfd6light}{HTML}{e79b93}
\definecolor{lfd7light}{HTML}{c4dbe9}

\definecolor{lfd2lighter}{HTML}{f5e4bb}
\definecolor{lfd3lighter}{HTML}{e8e8e8}
\definecolor{lfd4lighter}{HTML}{d3f4ed}
\definecolor{lfd5lighter}{HTML}{dcd1f2}
\definecolor{lfd6lighter}{HTML}{f9d3d0}
\definecolor{lfd7lighter}{HTML}{def2ff}
\definecolor{midgrey}{HTML}{939393}

\definecolor{desat}{HTML}{aaaaaa}
\definecolor{darkgrey}{HTML}{666666}

\colorlet{constraint}{lfd3lighter}
\colorlet{likelysuited}{lfd4}

\newcommand{\Code}{\mathcal{C}}

\newcommand{\framework}{DQI-Kit\xspace}

\newtheorem{fact}{Fact}
\newtheorem{definition}{Definition}
\newtheorem{theorem}{Theorem}

\def\BibTeX{{\rm B\kern-.05em{\sc i\kern-.025em b}\kern-.08em
    T\kern-.1667em\lower.7ex\hbox{E}\kern-.125emX}}

\begin{document}

\bstctlcite{BSTcontrol}

\title{
From Constraint to Code:
\framework~--~A Software Framework for Decoded Quantum Interferometry 
}

\author{
\IEEEauthorblockN{\blackout{Simon Thelen}}
\IEEEauthorblockA{
\textit{\blackout{Technical University of}} \\
\textit{\blackout{Applied Sciences Regensburg}} \\
\textit{\blackout{University of the Bundeswehr Munich}} \\
\blackout{Regensburg/Munich, Germany} \\
\genemail{mailto:simon.thelen@othr.de}{simon.thelen@othr.de}
}
\and
\IEEEauthorblockN{\blackout{Wolfgang Mauerer}}
\IEEEauthorblockA{
\textit{\blackout{Technical University of}} \\
\textit{\blackout{Applied Sciences Regensburg}} \\
\textit{\blackout{Siemens AG, Technology}} \\
\blackout{Regensburg/Munich, Germany} \\
\genemail{mailto:wolfgang.mauerer@othr.de}{wolfgang.mauerer@othr.de}
}
}

\maketitle

\begin{abstract}
Trying to solve hard optimisation problems with quantum techniques requires transformations of domain objectives and constraints into formats compatible with a chosen quantum algorithm. This often introduces inefficiencies and overheads that limit or even endanger potential quantum advantage for current and future approaches.
To understand and mitigate these inefficiencies, software toolchains are essential for implementing transformations, analysing overheads and eventually selecting optimal transformation paths.
Decoded Quantum Interferometry~(DQI) is a novel approach that achieves apparent quantum advantage for certain algebraic optimisation problems.
It natively operates on Max-LINSAT, which is unusual for combinatorial
optimisation, and creates the need for software solutions that alleviate the burden of manually transforming problems of interest into this format.

We present \framework, a software framework that provides a unified, extensible interface for automatically encoding constrained optimisation problems into Max-LINSAT.
Users can describe the various types of objectives and constraints that are common in industrial optimisation problems.
Our framework converts these into Max-LINSAT instances via a series of problem transformations and computes an estimate of the expected performance of DQI on these instances.
We provide an initial analysis of the implemented transformations, discussing inefficiencies and ways to mitigate them.
\framework is the basis for our ultimate goal of establishing a standardised framework that will enable further investigations to identify practical use cases for which quantum advantage with DQI can be achieved.

\end{abstract}

\begin{IEEEkeywords}
Quantum Software, DQI, Max-LINSAT
\end{IEEEkeywords}

\tikzset{
    constraintarrow/.style={
        draw=lfd6,
        line cap=round,
        dash pattern=on 4pt off 2pt
    }
}
\tikzset{
    auxiliaryarrow/.style={
        draw=lfd7,
        line cap=round,
        dash pattern=on 0pt off 1.5pt
    }
}
\tikzset{
    dependentarrow/.style={
        draw=lfd2,
        line cap=round,
        dash pattern=on 4pt off 1.5pt on 0pt off 1.5pt
    }
}
\tikzset{
    defaultarrow/.style={
        draw=lfd3,
        line cap=round,
    }
}

\begin{figure*}
    \centering

\begin{tikzpicture}[
    font=\footnotesize,
    node distance=0.65cm and 1.6cm,
    box/.style={
        rounded corners=3pt,
        inner sep=3pt,
        fill=lfd3lighter,
        align=center,
        draw=darkgrey,
    },
    arrow/.style={
        ->,
        thick,
        >=Stealth,
    },
    group/.style={
        rounded corners,
        draw,
        inner ysep=10pt,       
        inner xsep=8pt,
    }
]

\pgfmathsetmacro{\dist}{0.5}

\node[box] at (0, 0) (weighted-max-xorsat) {
    \textbf{Weighted}\\
    \textbf{Max-XORSAT}\\
    {\color{darkgrey} $a \oplus b \oplus c = 1 [\times 2]$}\\
    {\color{darkgrey} $a \oplus c = 0$}
};

\node[box, right=1.9cm of weighted-max-xorsat] (max-xorsat) {
    \textbf{Max-XORSAT}\\
    {\color{darkgrey} $a \oplus b \oplus c = 1$}\\
    {\color{darkgrey} $a \oplus b + c = 1$}\\
    {\color{darkgrey} $a \oplus c = 0$}
};

\node[box, left=1.2cm of weighted-max-xorsat] (ising) {
    \textbf{Classical Hamiltonian}\\
    {\color{darkgrey} $2 Z I Z - 3 ZZZ$}
};

\node[box, below=of ising, fill=constraint] (linear-obj) {
    \textbf{Linear objective}\\
    {\color{darkgrey} $4a - c $}
};
\node[box, below=of linear-obj] (modular) {
    \textbf{Set inclusion}\\
    \textbf{constraint}\\
    {\color{darkgrey} $a - b \in \{1, 2\} \mod{6}$}
};

\node[box, left=\dist of ising] (pseudo-func) {
    \textbf{Pseudo-Boolean}\\
    \textbf{function}\\
    {\color{darkgrey} $a (b + c - bc)$}
};

\newcommand{\binarylinearshift}{0.2cm}
\node[box, left=1.2*\dist of pseudo-func, fill=constraint, yshift=-\binarylinearshift] (binary-linear-obj) {
    \textbf{Binary linear objective}\\
    {\color{darkgrey} $3 x + 2y$}
};
\node[box, above=of binary-linear-obj, fill=constraint, yshift=\binarylinearshift] (logic) {
    \textbf{Boolean constraint}\\
    {\color{darkgrey}  $x \wedge (y \vee z)$}
};

\coordinate (mid-pf-ising) at ($(pseudo-func)!0.5!(ising)$);

\node[box] at (mid-pf-ising |- logic) (low-pseudo-func) {
    \textbf{Low degree pseudo-Boolean function}\\
    {\color{darkgrey} $ab - 2bc$}
};

\node[box, left=\dist of modular, fill=constraint] (linear-ineq) {
    \textbf{Linear inequality}\\
    {\color{darkgrey} $2 a - 3 b + 4 > c$}
};
\node[box, below=of pseudo-func, fill=constraint]  (linear-eq) {
    \textbf{Linear equality}\\
    {\color{darkgrey} $4 a + b = 3 c$}
};
\node[box, fill=constraint] at (linear-eq -| logic) (pseudo-const) {
    \textbf{Binary non-linear}\\
    \textbf{constraint}\\
    {\color{darkgrey} $a (b + c) \geq 2bc$}
};

\node[box] at ([yshift=0.7cm]weighted-max-xorsat |- modular) (weighted-max-linsat) {
    \textbf{Weighted}\\
    \textbf{Set-Max-LINSAT} \\
    {\color{darkgrey} $2 a - c \in \{1, 2\} \pmod p$}\\
    {\color{darkgrey} $b + 2 c \in \{0\} \pmod p [\times 2]$}
};

\node[box] at (max-xorsat |- weighted-max-linsat) (symmetric-max-linsat) {
    \textbf{Symmetric}\\
    \textbf{Set-Max-LINSAT} \\
    {\color{darkgrey} $2 a - c \in \{1, 2\} \pmod p$}\\
    {\color{darkgrey} $b + 2 c \in \{0, 1\} \pmod p$}
};

\pgfmathsetmacro{\groupshift}{4pt}
\pgfmathsetmacro{\arrowshift}{6.7pt}
\pgfmathsetmacro{\topshift}{12pt}

\begin{scope}[on background layer]

    \draw[draw=none, fill=lfd4lighter, dash pattern=on 1pt off 1pt, rounded corners=3pt]
        ([xshift=-\arrowshift]weighted-max-linsat.west) |-
        ([yshift=\groupshift+\topshift]max-xorsat.north) -|
        ([xshift=2*\groupshift]symmetric-max-linsat.east) |-
        ([yshift=-2*\groupshift]symmetric-max-linsat.south) -|
        ([xshift=-\arrowshift]weighted-max-linsat.west);
        
    \node[anchor=north west, text=lfd4] at ([yshift=\groupshift+\topshift, xshift=-\arrowshift]weighted-max-linsat.west |- max-xorsat.north) {DQI native};
    
    \draw[thick, lfd4, rounded corners=3pt]
        ([xshift=-\arrowshift]symmetric-max-linsat.west) |-
        ([yshift=\topshift]max-xorsat.north) -|
        ([xshift=\groupshift]symmetric-max-linsat.east) |-
        ([yshift=-\groupshift]symmetric-max-linsat.south) -|
        ([xshift=-\arrowshift]symmetric-max-linsat.west);
        
    \node[anchor=north west, text=lfd4] at ([yshift=\topshift, xshift=-\arrowshift]symmetric-max-linsat.west |- max-xorsat.north) {Optimal for DQI};
    
    \draw[draw=none, fill=lfd7lighter, rounded corners=3pt, dash pattern=on 1pt off 1pt]
        ([yshift=\topshift]logic.north) -|
        ([xshift=\groupshift]ising.south -| modular.east) |-
        ([yshift=-\groupshift]linear-obj.south) -|
        ([xshift=-\arrowshift]linear-obj.west) |-
        ([yshift=-\groupshift]pseudo-func.south) -|
        ([xshift=-\arrowshift]pseudo-func.west) |-
        ([yshift=-\groupshift]logic.south) -|
        ([xshift=-\arrowshift]logic.west -| binary-linear-obj.west) |-
        ([yshift=\topshift]logic.north);
    
    \node[anchor=north west, text=lfd7] at ([yshift=\topshift, xshift=-\arrowshift]binary-linear-obj.west |- logic.north) {Objectives};
    
    \draw[draw=none, fill=lfd7lighter, rounded corners=3pt, dash pattern=on 1pt off 1pt]
        ([yshift=\topshift]binary-linear-obj.north) -|
        ([xshift=\groupshift]binary-linear-obj.east) |-
        ([yshift=1.7*\groupshift]linear-eq.north) -|
        ([xshift=2.7*\groupshift]linear-eq.east) |-
        ([yshift=\arrowshift]modular.north) -|
        ([xshift=\groupshift]modular.east) |-
        ([yshift=-\groupshift]modular.south) -|
        ([xshift=-\arrowshift]binary-linear-obj.west) |-
        ([yshift=\topshift]binary-linear-obj.north);
        
    \node[anchor=north west, text=lfd7] at ([yshift=\topshift, xshift=-\arrowshift]binary-linear-obj.west |- binary-linear-obj.north) {Constraints};
\end{scope}

\draw[-, thick, constraintarrow] ($(weighted-max-xorsat.east) + (0, 0.75pt)$) -- ($(max-xorsat.west) + (-2.5pt, 0.75pt)$);
\draw[-, thick, dependentarrow] ($(weighted-max-xorsat.east) + (0, -0.75pt)$) -- ($(max-xorsat.west) + (-2.5pt, -0.75pt)$);
\draw[-stealth, very thick, draw=lfd6, draw opacity=0] (weighted-max-xorsat.east) -- (max-xorsat.west) ;

\draw[-, thick, constraintarrow] ($(weighted-max-linsat.east) + (0, 0.75pt)$) -- ($(symmetric-max-linsat.west) + (-2.5pt, 0.75pt)$);
\draw[-, thick, dependentarrow] ($(weighted-max-linsat.east) + (0, -0.75pt)$) -- ($(symmetric-max-linsat.west) + (-2.5pt, -0.75pt)$);
\draw[-stealth, very thick, draw=lfd6, draw opacity=0] (weighted-max-linsat.east) -- (symmetric-max-linsat.west) ;

\draw[-stealth, thick, dependentarrow] (weighted-max-xorsat) -- (modular);
\draw[-stealth, thick, defaultarrow] (modular) -- (weighted-max-linsat);
\draw[-stealth, thick, defaultarrow] (ising) -- (weighted-max-xorsat);
\draw[-stealth, thick, constraintarrow] (pseudo-func) -- (ising);
\draw[-stealth, thick, auxiliaryarrow] (pseudo-func) -- (low-pseudo-func);
\draw[-stealth, thick, defaultarrow] (low-pseudo-func) -- (ising);
\draw[-stealth, thick, dependentarrow] (logic) -- (pseudo-func);
\draw[-stealth, thick, auxiliaryarrow] (pseudo-const) -- (pseudo-func);
\draw[-stealth, thick, auxiliaryarrow] (pseudo-const) -- (linear-eq);
\draw[-stealth, thick, constraintarrow] (binary-linear-obj) -- (pseudo-func);
\draw[-stealth, thick, defaultarrow] (linear-eq) -- (modular);
\draw[-stealth, thick, defaultarrow] (linear-ineq) -- (modular);
\draw[-stealth, thick, constraintarrow] (linear-obj) -- (modular);

\end{tikzpicture}
    
    \caption{
        Transformations between formulations implemented by \framework.
        \textcolor{lfd6}{Red dashed lines} \protect\begin{tikzpicture}[baseline=-0.5ex] \protect\draw[constraintarrow, very thick] (0,0) -- (11.5pt,0); \end{tikzpicture} represent transformations that increase the number of constraints;
        \textcolor{lfd7}{blue dotted lines} \protect\begin{tikzpicture}[baseline=-0.5ex] \protect\draw[auxiliaryarrow, very thick] (0,0) -- (11.5pt,0); \end{tikzpicture} represent transformations that add additional auxiliary variables;
        \textcolor{lfd2}{yellow dash-dotted lines} \protect\begin{tikzpicture}[baseline=-0.5ex] \protect\draw[dependentarrow, very thick] (0,0) -- (12.5pt,0); \end{tikzpicture} represent transformations that lead to many linearly dependent constraints often negatively impacting DQI performance;
        \textcolor{lfd3}{grey solid lines} \protect\begin{tikzpicture}[baseline=-0.5ex] \protect\draw[thick, defaultarrow, very thick] (0,0) -- (11.5pt,0); \end{tikzpicture} represent (favourable) transformations where none of the above apply.
    }
    \label{fig:transformations}
\end{figure*}
\section{Introduction}
Owing to their computational difficulty even for small instance sizes, hard combinatorial optimisation problems are considered prime candidates for quantum advantage. Consequently, they are one of the most widely studied aspects of quantum software.
Many proposed algorithmic approaches such as quantum annealing and QAOA can be applied to optimisation problems in QUBO format~(Quadratic Unconstrained Binary Optimisation)~\cite{Yarkoni2022,Blekos2024,Thelen:2025,Thelen2024,baierstadler:2021,bharti:2022}.
This, in principle, allows for  approximating solutions to all NP-complete optimisation problems, as the required transformations are well known~\cite{Lucas2014}.
Although QUBO formulations are, in practical applications, much less common than classical problem formulations such as Boolean satisfiability (SAT) or mixed-integer linear programs (MILP), extensive research has been conducted on how to bridge the gap between domain problem descriptions and QUBO formulations~\cite{Zaman2021,Lobe2023,Franco2023}.
This enables domain experts to apply QUBO-based quantum algorithms to a broad class of industrial use cases.

Decoded Quantum Interferometry (DQI) provides a new algorithmic framework that reduces combinatorial optimisation problems to the decoding problem of a classical error-correcting code~\cite{Jordan2025}.
Unlike most previous quantum approaches, which are largely heuristic in nature, the approximation performance of DQI can often be efficiently computed analytically.
This allows researchers and engineers to evaluate the potential of DQI in size regimes that would otherwise be inaccessible with current hardware, enabling them to identify problem classes suitable for the algorithm.
The optimisation problems native to DQI are Max-XORSAT and its generalisation Max-LINSAT.
Formulating domain use cases as Max-LINSAT instances works very differently than for established formulations~\cite{Schmidbauer2025,gabor:19:qtop,Krueger:2020} like SAT, MILP or QUBO.
Consequently, both research and tool support are necessary, as
both are currently lacking.
Moreover, the solution quality of DQI varies substantially between instances as it depends on properties of an error-correcting code based on the specific Max-LINSAT instance.
Thus, assessing the suitability of DQI for a specific problem formulation requires knowledge of coding theory, adding further barriers not only for many quantum researchers, but also for domain users.
Published research on potential applications of DQI has mostly focused on coding-theoretic problems that have little to no practical relevance outside of purely algebraic contexts.

Consequently, we introduce \emph{\framework}, a framework that aims to aid quantum software engineers in overcoming these difficulties.
It available in our \href{https://github.com/lfd/qsw-2026-dqi-kit}{code repository} and \href{https://doi.org/10.5281/zenodo.20233787}{reproduction package} (links in PDF)~\cite{mauerer:22:q-saner}.
\framework provides an extensible, unified interface to model various types of constraints and objectives that are common in many industrial combinatorial optimisation problems.
Our framework converts such constraints and objectives into DQI-native Max-LINSAT formulations by performing a series of problem transformations, as shown in \autoref{fig:transformations}.
\framework also estimates the approximation performance achievable with DQI and compares it to the results of several classical solvers.
As visualised by the different-coloured arrows in \autoref{fig:transformations}, not all problem encodings are equally efficient, highlighting the need for a systematic exploration of the space of possible formulations and transformation paths.
We aim to initiate this exploration by providing a first analysis of the efficiency of the encoding approaches implemented by our framework (which is known to be an important issue in general for quantum algorithms~\cite{Gogeissl:2024,bharti:2022,schoenberger:23:pvldb,Schmidbauer:2026,Schmidbauer2023,Schmidbauer2025,schmidbauer:25:qce}).

The rest of this work is structured as follows:
In \autoref{sec:preliminaries}, we explain the context of our work by presenting the Max-LINSAT problem formulation and by providing brief introductions to both error-correcting codes and the DQI algorithm.
\autoref{sec:problem-encoding} covers the problem encodings and transformations implemented by \framework.
In Sections \ref{sec:suitable-max-linsat-instances} and \ref{sec:weighted-max-linsat}, we derive guidelines on how to identify problem instances suitable to DQI and options to mitigate some of the limitations of DQI.
\autoref{sec:framework} presents the \framework framework.
Related work is discussed in \autoref{sec:related-work}.
Finally, we summarise our contributions and give an overview on future work in \autoref{sec:conclusion}.

\section{Preliminaries} \label{sec:preliminaries}
\subsection{The Max-LINSAT Problem}
The canonical problem that DQI is designed to solve is \emph{Max-LINSAT}, formalised as follows:
\begin{definition} \label{def:max-linsat}
    Given a finite field $\mathbb{F}_q$,
    a matrix $\boldsymbol{B} \in \mathbb{F}_q^{m \times n}$ and a vector $\bold{v} \in \mathbb{F}_q$, \emph{Max-LINSAT} is the problem of finding a variable assignment $\boldsymbol{x} \in \mathbb{F}_q^n$ that maximises 
    \begin{equation*}
        f(\boldsymbol{x}) = \sum_{i = 1}^m f_i(\boldsymbol{x})
        \text{ with }
        f_i(\bold{x}) = \begin{cases}
            1 &\text{if } \bold{x} \cdot \bold{b}_i = v_i, \\
            0 &\text{otherwise}
        \end{cases}
    \end{equation*}
    where $\boldsymbol{b}_i$ denotes the $i$-th row of $\boldsymbol{B}$ and $\cdot$ denotes the Euclidean inner product.
\end{definition}
In other words, a Max-LINSAT instance is defined by a set of $m$ linear constraints over $n$ variables $x_j \in \mathbb{F}_q$, where the $i$-the constraint is $\sum_{j = 1}^n B_{i j} x_j = v_i$.
The objective is then to find a variable assignment that satisfies as many constraints as possible.
Recall that for any finite field $\mathbb{F}_q$, $q$ is either a prime or a prime power.
If $q$ is prime, the field $\mathbb{F}_q$ can be defined~(up to isomorphism) as the set $\mathbb{Z}_q = \{0, 1, \dots, q - 1\}$ along with addition and multiplication modulo $q$.
Thus, for prime $q$, a Max-LINSAT instance is a system of linear equations modulo $q$ where the goal is, again, to satisfy as many equations as possible (intuitively, MAX-LinSat can be seen as the optimisation version of solving linear systems over finite fields).
The problem is NP-hard in general and independent of $q$, which
makes it an interesting target to study for quantum computing (in cases where all constraints are satisfiable simultaneously, Max-LINSAT can be solved efficiently using Gaussian elimination.
So, one typically considers the overdetermined regime for $m > n$).

Due to its similarity to SAT, the special case for $q = 2$ with $x_j \in \{0, 1\}$ has been studied more extensively in the past~(\eg Refs.~\cite{Mezard2003,Ibrahimi2012,Pittel2016}).
Since addition modulo $2$ is equivalent to the binary XOR operation, Max-LINSAT with $q = 2$ is typically referred to as \emph{Max-XORSAT}.

\subsection{Error-Correcting Codes}
Let us now briefly review basic definitions and facts from coding theory relevant to determine efficient problem transformations for DQI, as this guides the design of our software architecture. For a more thorough introduction, see, for instance, Refs.~\cite{Lin2021,Guruswami2025}.

Given some alphabet $\Sigma$, an error-correcting code of block length $n$ and dimension $k$ is a subset $\Code \subseteq \Sigma^n$ along with an function which injective maps each message $\boldsymbol{m} \in \Sigma^k$ to a codeword $\boldsymbol{c} \in \Sigma^n$ where $n > k$.
The goal is to add redundancy by spreading the information of a single message symbol over several codeword symbols. 
This makes it possible to reconstruct the original codeword and therefore the original message, even though a few symbols were corrupted during the transmission over a noisy communication channel.
Given some corrupted codeword $\tilde{\bold{c}} \in \Sigma^n$, the goal of a decoder is to find the codeword $\bold{c} \in \mathcal{C}$ with the smallest Hamming distance to $\tilde{\bold{c}}$ where the \emph{Hamming distance} is defined as the number of positions at which to symbol strings differ.
If the Hamming distance between two distinct codewords $\bold{c}, \bold{c}' \in \mathcal{C}$ is small, then a corrupted can be close to both of them, making unique decoding impossible, even with relatively few errors.
Thus, one ideally wants the Hamming distance between codewords to be large.
In fact, unique decoding of $\ell$ errors, that is $\ell$ incorrect symbols, is only possible, in the worst case, if $\ell < d_\text{min} / 2$.
Here, $d_\text{min}$ denotes the \emph{minimum distance}, which the smallest Hamming distance between distinct codewords in $\mathcal{C}$.
Once we try to decode beyond $d_\text{min} / 2$, there will always be strings for which more than one codeword is within Hamming distance $\ell$, making unique decoding impossible.
However, for some codes, this situation is rare even for $\ell \geq d_\text{min} / 2$ as long as $\ell$ is not too large.
For these codes, reliable decoding far beyond half the minimum distance is possible if we can tolerate some small probability of decoding incorrectly.

\emph{Linear codes} are a special case of error-correcting codes where the alphabet is a finite field $\mathbb{F}_q$ and messages are vectors over this field.
The most common field choice is the binary field $\mathbb{F}_2 = \{0, 1\}$.
Given a finite field $\mathbb{F}_q$, a linear code $\mathcal{C}$ forms a $k$-dimensional subspace of the vector space $\mathbb{F}_q^n$, which is defined as the kernel of the so-called \emph{parity-check matrix} $\bold{H}$.
In other words, $\mathcal{C} = \{\bold{c} \in \mathbb{F}_q^n \mid \bold{H c = 0}\}$.
The distance between codewords, and thus the decoding capability of a linear code, is completely determined by its parity-check matrix:
\begin{fact} \label{fact:linearly-dependent-columns}
    Given a linear code $\mathcal{C}$ with parity-check matrix $\boldsymbol{H}$
    and a codeword $\bold{c} \in \mathcal{C}$, then there exists a codeword $\bold{c}' \in \mathcal{C}$ at Hamming distance $d$ of $\bold{c}$ if and only if $\boldsymbol{H}$ contains a set of $d$ linearly dependent columns.
    In particular, the minimum distance of $\mathcal{C}$ is equal to the smallest number $d_\text{min}$ such that $\boldsymbol{H}$ has $d_\text{min}$ linearly dependent columns.
\end{fact}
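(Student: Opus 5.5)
The proof rests on linearity: distances between codewords become weights of codewords, and weights of codewords correspond to dependencies among columns of $\boldsymbol{H}$. For any $\bold{c}, \bold{c}' \in \mathcal{C}$ the difference $\bold{e} = \bold{c}' - \bold{c}$ lies in $\mathcal{C}$, because $\mathcal{C}$ is the kernel of $\boldsymbol{H}$ and hence a subspace. The Hamming distance between $\bold{c}$ and $\bold{c}'$ equals the number of nonzero entries (the weight) of $\bold{e}$. Conversely, for any $\bold{e} \in \mathcal{C}$, the vector $\bold{c}' = \bold{c} + \bold{e}$ is a codeword at distance $\operatorname{wt}(\bold{e})$ from $\bold{c}$. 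So the first statement reduces to the following claim: $\mathcal{C}$ contains a nonzero vector of weight exactly $d$ if and only if $\boldsymbol{H}$ has $d$ linearly dependent columns. I read the statement as concerning distinct codewords and $d \ge 1$.

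Write $\boldsymbol{H} = (\bold{h}_1, \dots, \bold{h}_n)$ by columns. Then $\boldsymbol{H}\bold{e} = \sum_{j} e_j \bold{h}_j$, so a nonzero $\bold{e} \in \mathcal{C}$ with support $S$ gives the relation $\sum_{j \in S} e_j \bold{h}_j = \bold{0}$. All coefficients in this relation are nonzero. Hence the $|S| = d$ columns indexed by $S$ are linearly dependent, which proves the forward direction.

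For the converse, suppose $d$ columns indexed by a set $T$ are linearly dependent. Then some nontrivial combination $\sum_{j\in T} a_j \bold{h}_j = \bold{0}$ exists. This yields a nonzero codeword whose support lies inside $T$, so its weight is at most $d$, not necessarily exactly $d$. This is the main obstacle: some coefficients $a_j$ may vanish. The fix is to interpret ``a set of $d$ linearly dependent columns'' as a set of $d$ columns that is \emph{minimally} dependent, or equivalently admits a dependency relation in which every coefficient is nonzero. Under that reading the constructed codeword has weight exactly $d$. I would state this interpretation explicitly in the proof, since with the literal reading the ``if'' direction can fail. For example, padding a dependent set with an arbitrary extra column gives a larger dependent set, but a codeword of that larger weight need not exist.

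Under the literal reading, the minimum-distance statement follows anyway. By the reduction above, $d_\text{min}$ is the minimum weight of a nonzero codeword. Every nonzero codeword of weight $w$ gives $w$ dependent columns, so $d_\text{min}$ is at least the smallest size $d^*$ of a dependent column set. Conversely, a dependent set of size $d^*$ gives a nonzero codeword of weight at most $d^*$, so $d_\text{min} \le d^*$. Therefore $d_\text{min} = d^*$, and this part needs no minimality caveat.
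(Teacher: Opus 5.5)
The paper gives no proof of this Fact. It is quoted as standard coding theory, with textbook citations. Your argument is the textbook route: turn distances into weights via $\boldsymbol{c}'=\boldsymbol{c}+\boldsymbol{e}$, then read $\boldsymbol{H}\boldsymbol{e}=\sum_j e_j\boldsymbol{h}_j$ as a dependency among the support columns. Your diagnosis is also right: taken literally, the ``if'' direction is false. Over $\mathbb{F}_2$, the matrix with columns $(1,0)^T,(1,0)^T,(0,1)^T$ has kernel $\{000,110\}$, so all three columns are dependent, yet no two codewords are at distance $3$. The minimum-distance statement does survive, and your two-inequality argument for $d_{\min}=d^*$ is correct.

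The step that fails is your repair. ``Minimally dependent'' is \emph{not} equivalent to ``admits a dependency relation in which every coefficient is nonzero.'' Minimality implies a full-support relation, but not conversely. Over $\mathbb{F}_2$, take $\boldsymbol{H}$ whose first two columns are $(1,0)^T$ and last two are $(0,1)^T$. Then $\boldsymbol{h}_1+\boldsymbol{h}_2+\boldsymbol{h}_3+\boldsymbol{h}_4=\boldsymbol{0}$ is a full-support relation, although $\{\boldsymbol{h}_1,\boldsymbol{h}_2\}$ is already dependent. This breaks your forward direction under the minimal-dependence reading. There you only showed that the support columns satisfy a relation with nonzero coefficients, not that they are minimally dependent. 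Indeed, $1111$ is a codeword at distance $4$ from $0000$, while no $4$ columns of this $\boldsymbol{H}$ are minimally dependent.

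So drop ``minimally dependent'' and state the corrected fact as follows: a codeword at distance $d\ge 1$ from $\boldsymbol{c}$ exists if and only if some $d$ columns admit a linear relation with all $d$ coefficients nonzero. With that reading, both of your directions go through verbatim. For the minimum-distance part, note also that a smallest dependent set is automatically minimal, hence carries a full-support relation, so $d^*$ is attained by a codeword of weight exactly $d^*$.
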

This fact is particularly relevant in the context of DQI, as we will see in \autoref{sec:suitable-max-linsat-instances}, because it allows us to determine a priori how well DQI can approximate the optimal solution of a given Max-LINSAT instance.

The decoding problem for linear codes is NP-complete in general, even in the unique decoding regime.
In practice, however, efficient decoding is often possible via specialised decoders that exploit the algebraic structure of specific code families, or via heuristic methods such as belief propagation or information-set decoding.
Although these latter approaches do not offer worst-case guarantees, they still deliver excellent decoding performance on a wide range of codes.

\subsection{Decoded Quantum Interferometry}
Decoded Quantum Interferometry is a quantum algorithm which approximates solutions to Max-LINSAT instances.
It does so by preparing a superposition of all possible solutions that is biased toward good solutions.
Thus, when measuring this state, one likely obtains a good solution.
Concretely, DQI prepares the \emph{DQI state}:
\begin{equation} \label{eq:dqi-state}
    \ket{P(f)} = \sum_{\boldsymbol{x} \in \mathbb{F}_q^n} P(f(\boldsymbol{x})) \ket{\boldsymbol{x}}\text{.}
\end{equation}
Here, $f$ is the objective function as specified in \autoref{def:max-linsat} and $P$ is a polynomial of degree $\ell$.
The larger the value of $\ell$, the stronger the separation between good and bad solutions and the larger the probability of measuring a good solution.
However, how large one can choose $\ell$ to be depends on the error-correcting code $\mathcal{C}$ with parity-check matrix $\bold{B}^T$ where $\bold{B}^T$ is the transpose of the constraint matrix $\bold{B}$ from \autoref{def:max-linsat}.
DQI can only prepare \eqref{eq:dqi-state} exactly if there is a decoding algorithm for $\mathcal{C}$ that can perfectly correct up to $\ell$ errors.
For imperfect decoders, for example when $\ell$ is larger than half the minimum distance of $\mathcal{C}$, DQI can still approximate Max-LINSAT solutions by approximating the state \eqref{eq:dqi-state}.
However, the solution quality depends on how many error patterns with up to $\ell$ errors can be decoded correctly.

In the perfect decoding regime and also in the imperfect regime for some instance families, the expected number of satisfied constraints when measuring the DQI state can be efficiently computed classically.
This permits analysis of the algorithm on industrial-scale instances using current hardware.

In their study of DQI, Jordan~\etal~\cite{Jordan2025} consider a generalisation of Max-LINSAT, which we will refer to as \emph{Set-Max-LINSAT}.
Here, the $v_i \in \mathbb{F}_q$ are replaced by sets $F_i \subseteq \mathbb{F}_q$.
The $i$-th constraint is satisfied when $\bold{b} \cdot \bold{x} \in F_i$.
Since each such set inclusion constraint can be decomposed into disjoint equality constraints ($\bold{b} \cdot \bold{x} = v$ for all $v \in F_i$), Set-Max-LINSAT describes exactly the same optimisation problems as standard Max-LINSAT.
However, combining equality constraints with the same left-hand side, as described by the $\bold{b}_i$, into a single set inclusion constraint not only reduces the number of constraints but can also improve the approximation performance of DQI for reasons described below in \autoref{sec:suitable-max-linsat-instances}.

\section{Problem Encoding for Max-LINSAT} \label{sec:problem-encoding}

We now commence by outlining Max-LINSAT formulations for typical objective and constraint types, as they are found in many industrial combinatorial optimisation problems.
These formulations are implemented as part of the \framework software framework.
Owing to the multitude of possible encoding approaches proposed in the abundant optimisation literature, we necessarily need to restrict our efforts to the most common input formats.
Our extensible software design allows, however, to easily add further encodings, and we anticipate the selection will grow in the future.
We discuss some alternative encoding techniques alongside the implemented approaches.

Like QUBOs or other unconstrained problem formulations, Max-LINSAT does not support hard constraints: Every constraint can be broken,
and solution quality is measured by the number of broken constraints, which is reasonable for optimisation. However, most industrial optimisation problems include at least some hard constraints that immediately result in invalid solutions when 
they are broken.
To express these in the Max-LINSAT formalism, we treat them as soft constraints that result in a sufficiently large penalty if broken.
This makes it useful to consider a weighted version of Max-LINSAT for encoding constraints:
\begin{definition} \label{def:weighted-max-linsat}
    Given a finite field $\mathbb{F}_q$, a matrix $\bold{B} \in \mathbb{F}_q^{m \times n}$, a list of sets $F_1, \dots, F_m$ with $F_i \subseteq \mathbb{F}_q^m$ and a weight vector $\bold{w} \in \mathbb{Q}^m$ with $w_i > 0$,
    \emph{Weighted Set-Max-LINSAT} is the problem of finding a vector $\bold{x} \in \mathbb{F}_q^n$ that maximises
    \begin{equation*}
        f(\bold{x}) = \sum_{i = 1}^m f_i(\bold{x})
        \text{ with }
        f_i(\bold{x}) = \begin{cases}
            w_i &\text{if } \bold{x} \cdot \bold{b}_i \in F_i \\
            0 &\text{otherwise.}
        \end{cases}
    \end{equation*}
\end{definition}
We  discuss ways of dealing with these weights in the context of DQI in \autoref{sec:weighted-max-linsat}. While every weighted Set-Max-LINSAT instance can, in principle, can be converted into an unweighted instance by duplicating constraints, this is usually far from optimal.

\subsection{Equality and Inequality Constraints} \label{sec:equality-inequality-constraints}
DQI operates over a finite field $\mathbb{F}_q$, meaning each variable has $q$ possible values if not otherwise restricted.
This field structure allows for a natural encoding of many types of constraints.
For example, one type of constraint directly encodable into Max-LINSAT is an equality constraint $x_i = x_j$, which is expressed as $x_i + (-x_j) = 0$ where $-x_j$ is the additive inverse of $x_j$ in $\mathbb{F}_q$.
Beyond that, we can also easily encode inequality constraints of the form $x_i \neq x_j$. These are especially common for categorical variables, for example to describe a scheduling conflict.
A \enquote{not equal} constraint is expressed as $x_i + (-x_j) \in \mathbb{F}_q \setminus \{0\}$.
This is efficient to encode even for large $q$ since we do not have to represent the set $F_i$ for a given constraint $\bold{b}_i \cdot \bold{x} \in F_i$ explicitly during the DQI algorithm.
In DQI, each $F_i$ is encoded as is the Fourier transform of a shifted and scaled version of the phase oracle
\begin{equation} \label{eq:phase-oracle}
y \mapsto \begin{cases} +1 \text{ if } y \in F_i \text{,}\\ -1 \text{ if } y \not\in F_i \text{.}\end{cases}
\end{equation}
For simple sets such $F_i = \{0\}$ and $F_i = \mathbb{F}_q \setminus \{0\}$ the number of gates required to implement this phase oracle, and thus the encoding of the $F_i$ during the DQI algorithm, is only logarithmic in $q$.

Clearly, we can extend these types of equality and inequality constraints to arbitrary linear combinations of variables.
If $q$ is prime, can can interpret these as (in)equalities modulo $q$.
However, this interpretation does not hold when $q = p^k$ for a prime $p$ and $k > 1$.
This is because the field operations then correspond to addition and multiplication of degree-$k$ polynomials over $\mathbb{F}_p$,
which is algebraically distinct from the same operations over the integers modulo $q$.

Non-modular linear equalities and inequalities over the integers are critical to express many industrial optimisation problems~\cite{Nemhauser1988,Williams2013}.
To encode these, for each integer variable $x_i \in \mathbb{Z}$, we must select a range $l_i, u_i \in \mathbb{Z}$ with $x_i \in [l_i, u_i]$.
Then, we select a prime $p$ large enough so all integer constraints can be expressed modulo that prime.
Along with the actual constraints we want to encode we also need to add constraints $x_i \in \{l_i, \dots, u_i \} \pmod{p}$ for all integer variables to restrict their range, each with a sufficiently large weight.
Depending on the types of constraints we want to express, $p$ might need to be quite a bit larger then the sizes of the ranges $[l_i, u_i]$.
In order for an \enquote{equal} constraint $\bold{b}_i \cdot \bold{x} = v_i$ or a \enquote{not equal} constraint $\bold{b}_i \cdot \bold{x} \neq v_i$ to also hold modulo $p$, $p$ must be larger than $\max_{\bold{x} \in \mathcal{X}} |\bold{b}_i \cdot \bold{x} - v_i|$ where $\mathcal{X} = [l_1, u_1] \times \cdots \times [l_m, u_m]$ is the set of allowed values for the $\bold{x}$.
This is visualised in \autoref{fig:modulo-constraints-equality}.
Ordering equalities ($\leq, <, \geq, >$) are also possible.
For example $\bold{b}_i \cdot \bold{x} \geq v_i$ can be encoded as the set inclusion constraint $\bold{b}_i \cdot \bold{x} - v_i \in \{a \in \mathbb{F}_q \mid a \in \{0, \dots, u_i\}\pmod{p}\}$.
However, for this, $p$ generally has to be larger than for \enquote{equal} or \enquote{not equal} constraints.
To be precise, $p$ must be greater than $U_i - L_i$ where $L_i = \min_{\bold{x} \in \mathcal{X}} (\bold{b}_i \cdot \bold{x} - v_i)$ and $U_i = \max_{\bold{x} \in \mathcal{X}} (\bold{b}_i \cdot \bold{x} - v_i)$ as shown in \autoref{fig:modulo-constraints-inequality}.
\begin{figure}
    \centering
    \begin{subfigure}[t]{0.45\linewidth}
        \centering
        \includegraphics{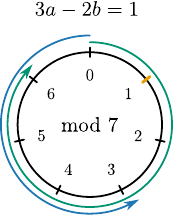}
        \caption{$=$ or $\neq$ constraint}
        \label{fig:modulo-constraints-equality}
    \end{subfigure}
    \begin{subfigure}[t]{0.45\linewidth}
        \centering
        \includegraphics{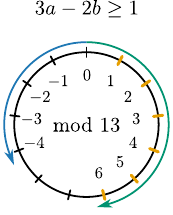}
        \caption{$\leq,<, \geq, >$ constraint}
        \label{fig:modulo-constraints-inequality}
    \end{subfigure}
    \caption{
        Max-LINSAT encodings of integer constraints:
        Ordering inequalities  generally require a larger field order.
        In the example, we assume that $a, b \in \{0, 1, 2\}$, so $-4 \leq 3a - 2 b \leq 6$.
    }
    \label{fig:modulo-constraints}
\end{figure}

If we increase the order of the field $\mathbb{F}_p$ to accommodate non-modular integer constraints, existing modular constraints modulo a smaller prime $p' < p$ need to be modified to conform to the larger field.
For example, $\bold{b}_i \cdot \bold{x} = v_i \pmod{p'}$ must be replaced with $\bold{b}_i \cdot \bold{x} \in \{a \in \mathbb{F}_p \mid a = v_i \pmod{p'}\}$.
If $p \neq p'$, then $p$ and $p'$ are coprime. Therefore, $p$ must be greater than $U_i - L_i$ for this to correctly encode the constraint, just like with ordering constraints.

Another option to encode integer constraints is to use a $q$-ary encoding of integer variables where each Max-LINSAT variable represents a single digit of the integer variable.
Using this approach, Sabater \etal~\cite{Sabater2025} developed binary-encoded formulations for integer constraints.
These are based on Max-XORSAT gadgets implementing a binary ripple-adder.
Their approach results in a $\bold{B}$ matrix with many linearly dependent sets of $3$ rows, inhibiting DQI performance, as explained in \autoref{sec:weighted-max-linsat}.
For this reason, we chose not to include their encoding technique in the first version of \framework in favour of our more direct approach, which we deem to be more efficient and broadly applicable.

\subsection{Linear Objectives} \label{sec:linear-objectives}
The goal of many optimisation problems is to find a variable assignment that maximises or minimises some function under a set of constraints.
While Max-LINSAT does not support such objective functions directly, we can still encode them when allowing weighted constraints.
To illustrate this, we can use the constraint $x_i = 1$ for $x_i \in \{0, 1\}$ with weight $w_i$ to represent the objective function $w_i x_i$, which we want to maximise.
Similarly, $x_i = 1$ with weight $w_i$ represents the maximising objective $1 - w_i x_i$ or alternatively, ignoring the constant offset of $1$, the minimising objective $w_i x_i$.
This way, each objective function that is a linear combination of binary variables can be represented by a set of weighted Max-XORSAT constraints.

Encoding linear combinations of non-binary variables is also possible but less efficient:
For example, the objective $a x_i$ with $x_i \in \{0, \dots, 6\}$ and $a \in \mathbb{Q}_+$ can be represented by three weighted constraints: $x_i \in \{4, 5, 6\}$ with weight $4 a$, $x_i \in \{2, 3, 6\}$ with weight $2 a$ and $x_i \in \{1, 3, 5\}$ with weight $a$.
These constraints can be constructed from the binary representation of each of the possible values of $x_i$, so $\lceil\log_2(u_i - l_i + 1)\rceil$ weighted constraints are necessary for a term $a x_i$ with $x_i \in \{l_i, \dots, r_i\}$.

\subsection{Polynomial Constraints and Objectives} \label{sec:qubos-pubos}
Polynomial constraints and objectives extend linear formulations by allowing interactions between multiple variables, enabling the modelling of higher-order dependencies.
Weighted Max-LINSAT can describe maximisation problems of the form $\max_{\bold{x}} P(\bold{x})$ where $P$ is some polynomial and $\bold{x} \in \{0, 1\}^k$ is vector of binary variables.
To see this, we can use the fact that Max-XORSAT instances can be interpreted as Hamiltonians that are an unweighted sum of Pauli-$Z$-Hamiltonians.
This equivalence was noticed by \cite{Jordan2025,Schmidhuber2025,Parekh2025} among others.
In light of Weighted Max-XORSAT, this gives rise to the following generalised equivalence:
\begin{fact} \label{fact:weighted-max-xorsat-hamiltonian}
    Given a Weighted Max-XORSAT instance as defined in \autoref{def:weighted-max-linsat}, its objective $f$ is equal to $f(\bold{x}) = \langle \bold{x}| H |\bold{x}\rangle$ for the Hamiltonian
    \begin{gather*}
        H = \frac{W}{2} I^{\otimes n} + \frac{1}{2}\sum_{i = 1}^m w_i (-1)^{v_i} \prod_{\substack{j = 1 \\ B_{i j} = 1}}^m Z^{(j)} 
    \end{gather*}
    where 
    $W = \sum_{i = 1}^m w_i$ and
    $Z^{(i)}$ is the Pauli-$Z$ operator applied to the $i$-th qubit.
\end{fact}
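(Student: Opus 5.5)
The plan is to prove the identity term by term. Both sides are sums over the $m$ constraints, so it suffices to show that for each constraint $i$
\begin{equation*}
    f_i(\bold{x}) = \frac{w_i}{2} + \frac{w_i}{2}(-1)^{v_i} \langle \bold{x}| \prod_{j:\,B_{ij}=1} Z^{(j)} |\bold{x}\rangle .
\end{equation*}
Summing over $i$ then gives the constant $\frac{W}{2}$ (from $\langle \bold{x}|I^{\otimes n}|\bold{x}\rangle = 1$) together with the stated Pauli sum. Linearity of $\bold{x} \mapsto \langle \bold{x}|\cdot|\bold{x}\rangle$ in the operator allows the per-constraint identities to be summed into $\langle \bold{x}|H|\bold{x}\rangle$.

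The first step is to evaluate the Pauli string on a computational basis state. Since $Z|x_j\rangle = (-1)^{x_j}|x_j\rangle$ and the factors act on distinct qubits, $\bold{x}$ is an eigenvector of $\prod_{j:B_{ij}=1} Z^{(j)}$. The eigenvalue is $\prod_{j:B_{ij}=1}(-1)^{x_j} = (-1)^{\sum_j B_{ij}x_j}$, where the exponent may be reduced mod $2$. This gives $(-1)^{\bold{b}_i\cdot \bold{x}}$ with the inner product taken over $\mathbb{F}_2$. Multiplying by $(-1)^{v_i}$ yields $(-1)^{\bold{b}_i\cdot\bold{x} + v_i}$. This equals $+1$ exactly when $\bold{b}_i\cdot\bold{x} = v_i$ in $\mathbb{F}_2$, and $-1$ otherwise.

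Plugging this in, the right-hand side of the per-constraint identity equals $\frac{w_i}{2}(1+1)=w_i$ when the constraint is satisfied and $\frac{w_i}{2}(1-1)=0$ when it is not. That matches $f_i$ from \autoref{def:weighted-max-linsat}, specialised to $q=2$ with $F_i=\{v_i\}$.

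The argument has no real obstacle. The only points needing care are bookkeeping ones.
\begin{itemize}
\item The product index in the statement should range over $j=1,\dots,n$, the qubits, rather than up to $m$.
\item The Weighted Max-XORSAT setting means $F_i=\{v_i\}$ with $v_i\in\{0,1\}$.
\item The sign convention $Z|1\rangle=-|1\rangle$ must match the encoding of bit values in $|\bold{x}\rangle$.
\end{itemize}
I will state these conventions explicitly at the start and then carry out the two-line eigenvalue computation.
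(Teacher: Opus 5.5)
Your proposal is correct and follows the same route as the paper: the paper justifies the fact in one line from the per-constraint identity $f_i(\bold{x}) = \frac{1}{2}\bigl(1 + (-1)^{\bold{b}_i\cdot\bold{x}+v_i}\bigr)$ and the eigenvalue relation $Z^{(j)}|\bold{x}\rangle = (-1)^{x_j}|\bold{x}\rangle$, which is exactly the computation you carry out. Your bookkeeping remarks are apt: the product should run over $j = 1,\dots,n$, and the paper's one-line justification omits the factor $w_i$ in $f_i$, which your per-constraint identity correctly includes.
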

This follows from $f_i(\bold{x}) = \frac{1}{2}(1 + (-1)^{\bold{b}_i \cdot \bold{x} + v_i})$ and the fact that $Z^{(i)} \ket{\bold{x}} = (-1)^{x_i} \ket{\bold{x}}$ for all $\bold{x} \in \{0, 1\}^n$

Since we can ignore the constant offset $W / 2$ when trying to find a solution maximising $f(\bold{x})$, we can, vice versa, convert every Hamiltonian that is diagonal in the $Z$ basis into an equivalent Weighted Max-XORSAT instance.
This applies, in particular, to every Ising Hamiltonian and, thus, to every QUBO.
Encoding industrial problems as QUBOs is well-established topic in quantum software engineering~\cite{Lucas2014,Yarkoni2022}.
The corresponding problem formulations can all, in principle, be directly transformed into Weighted Max-LINSAT instances.
By generalising the QUBO reduction to Hamiltonians with Pauli-$Z$ terms of any degree, every polynomial unconstrained binary optimisation (PUBO) problem can be encoded as a Weighted Max-XORSAT instance: Formally,
a PUBO is an unconstrained optimisation problem whose objective function $\{0, 1\}^n \rightarrow \mathbb{R}$ is given by a multi-variate polynomial of arbitrary degree.
In contrast, QUBOs encode polynomials only of degree $2$.
Every function $b: \{0, 1\}^n \rightarrow \mathbb{R}$ can be described as such a multi-variate polynomial~\cite{Hammer1968}.
Since $x^2 = x$ for $x \in \{0, 1\}$, this polynomial is always multi-linear, meaning that the exponent of every variable is at most 1 of each of its monomials.
Via its representation as a multi-linear polynomial, we can encode every pseudo-Boolean function $b : \bold{x} \mapsto b(\bold{x})$ as a Weighted Max-XORSAT instance.
To do this, we simply substitute each $x_i$ in the polynomial representation of $b$ with $(1 - Z^{(i)}) / 2$ and apply the equivalence from \autoref{fact:weighted-max-xorsat-hamiltonian}.
However, since a multi-linear polynomial with $n$ variables, in general, consists of $2^n$ monomials, exponentially many Max-XORSAT constraints are necessary in the worst case.
In fact, a simple product $x_1 x_2 \dots x_k$ represented by the Hamiltonian $1/2^k (1 - Z^{(1)}) \dots (1 - Z^{(k)})$ already leads to an exponential number of constraints in $k$.

There are multiple ways to handle the exponential blow up: 
For \framework, we selected a relatively simple technique of introducing additional auxiliary variables that replace high-order sub-terms.
We then add constraints to ensure that the auxiliary variables are equal to the terms they represent.
To guarantee equality between two expressions $e_1(\bold{x})$ and $e_2(\bold{x})$, we add a weighted constraint $-(e_1(\bold{x}) - e_2(\bold{x}))^2$, which is maximised when $e_1(\bold{x}) = e_2(\bold{x})$.
For example, $x_1 x_2 x_3 x_4$ can be split into $x_1 x_2 = a_1$, $x_3 x_4 = a_2$ and $a_1 a_2$ where the two equalities are represented by the terms $-(x_1 x_2 - a_1)^2$ and $-(x_3 x_4 - a_2)^2$.
With this approach, the number of Max-XORSAT constraints only grows linearly with the number of variables at the expense of using a potentially linear number of auxiliary variables.
In addition to polynomial objectives, this auxiliary variable approach also allows us to express polynomial constraints by first replacing the term with an auxiliary variable and then encoding the constraint, as shown in \autoref{sec:equality-inequality-constraints}.
A more general option to reduce the number of constraints would be to employ a quadratisation algorithm.
This similarly reduces monomial degrees at the expense of additional auxiliary variables.
Several well-understood quadratisation techniques exist that allow for efficient implementations~\cite{Dattani2019,Schmidbauer:2026}.

Unfortunately, there does not seem to be an obvious way to generalise the encodings of polynomial objectives and constraints explained above to non-binary variables in a way that is compatible with Max-LINSAT.
These encodings exploit the group homomorphism $\mathbb{Z}_2 \rightarrow \mathbb{Z}, x \mapsto (-1)^x$.
The generalisation of this for $p > 2$ is $\mathbb{Z}_p \rightarrow \mathbb{C}, x \mapsto \mathrm{e}^{2\pi \mathrm{i} x / p}$.
Encoding this into Max-LINSAT is theoretically possible, but would result in complex constraint weights, which are not well-defined.
It seems likely that, similarly to MILP formulations, general polynomial constraints and objectives can only be expressed approximately in Max-LINSAT~\cite{Nemhauser1988,Williams2013}.

\subsection{Boolean constraints} \label{sec:logical-constraints}
The final constraint type we consider is Boolean constraints.
Every Boolean constraint, described by a Boolean function $b: \{0, 1\}^n \rightarrow \{0, 1\}$, can be transformed into a set of weighted Max-XORSAT constraints by representing $b$ as multi-linear polynomial and then applying the techniques explained in \autoref{sec:qubos-pubos}.
For example, logical AND is described by the polynomial $(x_1, x_2) \mapsto x_1 x_2$, which is transformed into the Hamiltonian $H = \frac{1}{4} (1 - Z^{(1)}) (1 - Z^{(2)})$.
This, in turn, is converted into the following Max-LINSAT constraints
\begin{align} \label{eq:and}
    x_1 &= 1 \pmod 2 \nonumber \\
    x_2 &= 1 \pmod 2 \\
    x_1 + x_2 &= 0 \pmod 2\text{,} \nonumber
\end{align}
each with weight $1/2$.
If both $x_1$ and $x_2$ are one, then all $3$ constraints are satisfied.
Otherwise, only $1$ is satisfied.
When including the constant offset of $-1/2$, this leads to an objective function, which has value one if $x_1 \wedge x_2$ and zero otherwise.
Generalising \eqref{eq:and} to $n$ variables leads to the constraints
\begin{gather*}
    \left\{
    \boldsymbol{x} \cdot \boldsymbol{b} = |\boldsymbol{b}| \,\,(\mathrm{mod}\, 2),
    \mid
    \boldsymbol{b} \in \{0, 1\}^n, 
    |\boldsymbol{b}| > 0
    \right\}
\end{gather*}
Here, all $2^n - 1$ constraints are satisfied if $|\boldsymbol{x}| = n$ and $2^{n - 1} - 1$ are satisfied otherwise.
As can be seen by this example, transforming Boolean functions into Max-XORSAT constraints can and often does lead to the exponential constraint increase described above.
The same effect, for example, happens for an $n$-variable OR constraint, which is transformed into $2^n - 1$ constraints of the form $\bold{b \cdot x} = 1 \pmod{2}$.
This necessitates the usage of degree reduction techniques as explained in \autoref{sec:qubos-pubos}.

\section{Identifying Problems Suited for DQI} \label{sec:suitable-max-linsat-instances}
Although, as shown in \autoref{sec:problem-encoding}, predominant objective and constraint types can be transformed into Max-LINSAT, not all formulations are equally suited to the DQI algorithm.
Finding formulations that work well with DQI requires identifying Max-LINSAT instances where $\Code = \{\bold{y} \mid \bold{B}^T \bold{y} = \bold{0}\}$ is a strong error-correcting code.
If we only consider perfect decoders, the relevant quantity for the quality of a code is the minimum distance of $\Code$, which, by \autoref{fact:linearly-dependent-columns}, is the smallest number $d$ such that $\bold{B}^T$ linearly dependent columns.
A set of linearly dependent columns in $\bold{B}^T$ one-to-one corresponds to a set of linearly dependent rows in $\bold{B}$.
As an illustrative example, consider the following three constraints:
\begin{align} \label{eq:constraint-cycle}
    x_1 + 2x_2 &= 2 \pmod{3} \nonumber \\
    2x_2 + x_3 &= 1 \pmod{3} \\
    x_3 + 2x_1 &= 1 \pmod{3} \nonumber
\end{align}
The linear combination $\bold{b}_1 + 2 \bold{b}_2 + \bold{b}_3$ yields
\begin{equation*}
(1, 2, 0) + 2 (0, 2, 1) + (2, 0, 1) = (0, 0, 0) \pmod{3}\text{.}
\end{equation*}
This means that the linear code corresponding to any Max-LIN\-SAT instance containing these constraints has minimum distance at most 3, independent of the right-hand sides of the constraints.
For imperfect decoders, we cannot solely rely on the minimum distance but also need to consider the distance distribution of the code.
By \autoref{fact:linearly-dependent-columns}, we want a code where most codeword pairs have a large distance.
This, again, corresponds to a problem matrix $\bold{B}$ with few linearly dependent rows.

Based on this insight, we can identify specific types of constraints that are likely to impair DQI performance:
\begin{enumerate}
    \item \emph{Duplicate constraints:}
    Two constraints $\bold{b}_i \cdot \bold{x} = v_i$ and $\bold{b}_j \cdot \bold{x} = v_j$ with $\bold{b}_i = \bold{b}_j$ immediately lead to a minimum distance of $2$, even if $v_i \neq v_j$.
    This results in a code that, in the worst case, cannot even correct a single error, since we can not distinguish whether an error occurred at position $i$ or $j$.
    This severely limits decoders, even in the imperfect decoding regime, likely degrading the approximation performance of DQI significantly.
    
    \item \emph{AND/OR constraints:}
    Examining the Boolean AND formulation in \eqref{eq:and}, we observe that the three Max-XORSAT constraints are linearly dependent resulting in a minimum distance of at most $3$.
    The same effect can also be observed for OR constraints.

    \item \emph{Short (in)equality cycles:}
    A cycle of three inequality constraints $x_1 \neq x_2$, $x_2 \neq x_3$, $x_3 \neq x_1$ leads to a situation similar to the outlined in \eqref{eq:constraint-cycle}, which, in turn, leads to a minimum distance of at most $3$.
    One example of this effect in a practical use case is the Max-cut problem where, given an undirected graph $G = (V, E)$, the objective is to colour each vertex with one of two colours in such a way that the number of edges connecting two vertices of opposite colour is maximised.
    Max-cut can naturally be encoded as a Max-XORSAT problem by using one variable $x_v \in \mathbb{F}_2$ for each vertex $v \in V$ and adding one constraint $x_u \neq x_v$ for each edge $(u, v) \in E$.
    It is easy to see that, for this problem formulation, the minimum distance of the code $\Code$ is equal to the girth of $G$, which is defined as the length of the longest cycle.
    More generally, any cycle
    \begin{gather*}
        l_1 x_1 \sim_1 r_2 x_2, \,\,\,\,
        l_2 x_2 \sim_2 r_3 x_3, \,\,\,\,
        \dots, \,\,\,\,
        l_k x_k \sim_2 r_1 x_1
    \end{gather*}
    for non-zero coefficients $l_i, r_i$ and (in)equality relations $\sim_i$ leads to a linear dependency of $k$ constraints.
\end{enumerate}

As these examples show, linear dependencies arise naturally in many practical problem formulations, making it difficult to avoid them entirely.
Therefore, in addition to identifying use cases with few inherent dependencies, we deem dependency-reduction techniques to be a promising approach for broadening the applicability of DQI.
Below, we demonstrate that mitigating linear dependencies is possible to a certain extent by outlining two concrete techniques.
Our aim is for \framework to facilitate further investigation and refinement of such techniques within the research community.

First, consider the issue of an AND constraint resulting in three linearly dependent Max-LINSAT constraints.
This linear dependency can be eliminated by encoding the Boolean constraint only approximately in the sense that the \emph{yes} case corresponds to the optimal variable assignment, while the \emph{no} case corresponds to a sub-optimal but not necessarily worst-possible assignment.
For example, if we remove the constraint $x + y = 0 \pmod{2}$ removing the linear dependency, the set of constraints is still maximally satisfied (two out of two) when $x \wedge y$.
This means that the optimal solution of the Max-LINSAT instance is not affected.
As shown by Sabater \etal~\cite{Sabater2025}, this approach also works for the three-bit majority gate.
It is likely that it can be generalised to various other logical constraints, which could considerably improve DQI's performance on broad classes of problems.

A second, more widely applicable technique involves introducing auxiliary variables to resolve linear dependencies.
To illustrate this technique, we present a simple gadget derived from the following result:
\begin{theorem} \label{thm:gadget}
    Let $f(\bold{x})$ be the objective function of a Max-LINSAT instance with coefficient matrix $\bold{B} \in \mathbb{F}_q^{m \times n}$ and sets $F_1, \dots, F_m \subseteq \mathbb{F}_q$.
    Fix an index $i \in \{1, \dots, m\}$ and a positive integer $k$.
    Construct a new Max-LINSAT instance by 
    introducing new variables $y_1, \dots, y_k$, 
    replacing the constraint $\bold{b}_i \cdot \bold{x} \in F_i$
    with $\bold{b}_i \cdot \bold{x} + \sum_{j = 1}^k y_j \in F_i$
    and adding constraints $y_j \in \{0\}$ for all $j \in \{1, \dots, k\}$.
    Let $\hat{f}(\bold{x}, \bold{y})$ denote the objective function for this newly constructed instance.
    Then, for every $\bold{x} \in \mathbb{F}_q^n$,
    $k + f(\bold{x}) = \max_{\bold{y \in \mathbb{F}_q^k}} \hat{f}(\bold{x, y})$.
\end{theorem}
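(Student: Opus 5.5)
We establish the identity by proving the two inequalities $\max_{\bold{y}} \hat{f}(\bold{x},\bold{y}) \geq k + f(\bold{x})$ and $\hat{f}(\bold{x},\bold{y}) \leq k + f(\bold{x})$ for every $\bold{y} \in \mathbb{F}_q^k$, with $\bold{x} \in \mathbb{F}_q^n$ fixed throughout. We first write $\hat{f}$ explicitly. Every constraint other than the $i$-th is unchanged and does not involve $\bold{y}$, so it contributes $f_{i'}(\bold{x})$. The modified constraint contributes $\hat{f}_i(\bold{x},\bold{y}) = [\bold{b}_i \cdot \bold{x} + s \in F_i]$, where $s = \sum_{j=1}^k y_j$. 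The new constraints contribute $\sum_{j=1}^k [y_j = 0]$. Hence
\begin{equation*}
\hat{f}(\bold{x},\bold{y}) = f(\bold{x}) - f_i(\bold{x}) + [\bold{b}_i \cdot \bold{x} + s \in F_i] + \sum_{j=1}^k [y_j = 0].
\end{equation*}

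For the lower bound, we choose $\bold{y} = \bold{0}$. Then $s = 0$, so the modified constraint coincides with the original one and contributes exactly $f_i(\bold{x})$. All $k$ auxiliary constraints are satisfied, which gives $\hat{f}(\bold{x},\bold{0}) = k + f(\bold{x})$.

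For the upper bound, we let $z = |\{j : y_j \neq 0\}|$ and split into cases. If $z = 0$, we are back in the case $\bold{y} = \bold{0}$ and equality holds. If $z \geq 1$, the auxiliary constraints contribute $k - z \leq k - 1$. The modified constraint contributes at most $1$, and $-f_i(\bold{x}) \leq 0$. Together these give
\begin{equation*}
\hat{f}(\bold{x},\bold{y}) \leq f(\bold{x}) - f_i(\bold{x}) + 1 + k - 1 \leq k + f(\bold{x}).
\end{equation*}
Combining the two bounds yields $\max_{\bold{y}} \hat{f}(\bold{x},\bold{y}) = k + f(\bold{x})$.

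The only step needing care is the case $z \geq 1$. Choosing a nonzero $\bold{y}$ can flip the $i$-th constraint from unsatisfied to satisfied, gaining at most one unit, but any nonzero $\bold{y}$ violates at least one auxiliary constraint, losing at least one unit. The argument uses no structure of $F_i$ or $\mathbb{F}_q$ beyond the fact that each constraint takes values in $\{0,1\}$. In the weighted setting the same bookkeeping would require the auxiliary constraints to carry weight at least $w_i$, but that case is outside this statement.
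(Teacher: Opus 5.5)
Your proof is correct and is essentially the paper's argument. The value $k + f(\bold{x})$ is attained at $\bold{y} = \bold{0}$, and any nonzero $\bold{y}$ violates at least one auxiliary constraint $y_j \in \{0\}$ while gaining at most the single modified constraint. The only difference is organisational: you split on whether $\bold{y} = \bold{0}$ rather than on whether $\bold{b}_i \cdot \bold{x} \in F_i$. This makes the upper bound explicit in both cases, whereas the paper leaves it implicit in the case $\bold{b}_i \cdot \bold{x} \in F_i$ (it follows from there being only $k+1$ new constraints).
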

\begin{proof}
    For a given $\bold{x}$, define $v = \bold{b}_i\cdot\bold{x}$.
    If $v \in F_i$, then by setting $y_j = 0$ for $1 \leq j \leq k$, all $k + 1$ new constraints are satisfied, so $k + f(\bold{x})$ constraints are satisfiable in total.
    
    Otherwise, if $v \not\in F_i$, we can either try to satisfy $\bold{b}_i \cdot \bold{x} + \sum_{j} y_j \in F_i$, which requires us to set at least one of the $y_j$ to a non-zero value and satisfying at most $k$ out of the $k + 1$ added constraints,
    or we can satisfy all constraints of the form $y_j = \{0\}$, which leaves $\bold{b}_i \cdot \bold{x} + \sum_{j} y_j \in F_i$ unsatisfied, again satisfying $k$ out of $k + 1$ new constraints.
    In either case, $k + f(\bold{x})$ constraints are satisfiable in total.
\end{proof}
We can remove linear constraint dependencies by applying \autoref{thm:gadget}:
Let $\bold{b}_1, \dots, \bold{b}_d$ be a minimal set of linearly dependent rows, in the sense that after removing any single row, the remaining rows are linearly independent.
Then, applying the gadget from \autoref{thm:gadget} to any of the $\bold{b}_i$ ($1 \leq i \leq d$) removes the linear dependency but adds a new linear dependency of length $d + k$, which includes the $k$ added constraints $y_j \in \{0\}$.
This increases the minimum distance by $k$ at the expense of adding $k$ variables and $k$ constraints.
The gadget thus can help improve the approximation performance of DQI by reducing few small linear dependencies.
However, introducing many gadgets with large $k$ across the Max-LINSAT instance can quickly lead to diminishing returns.
This is because, for large instances, the performance of DQI depends on the ratio $\ell / m$~\cite{Jordan2025} and our gadget increases both $\ell$ and $m$ by $k$.

In this section, we considered the theoretical decodability of codes derived from Max-LINSAT instances, allowing us to derive universally applicable upper bounds on DQI performance.
However, actual performance and runtime of DQI vary depending on the classical decoder, which warrants a more thorough analysis.
For instance, belief propagation, the most widely used general purpose decoding algorithm, requires consideration of additional instance properties, such as the sparsity of the constraint matrix and the number of variable pairs shared across constraints \cite{Lin2021}.
Conversely, specialised decoders tailored to certain problem structures can improve DQI performance considerably \cite{Jordan2025,Gu2025,Khattar2025}.
For this reason, \framework is designed from the outset to support different classical decoders, enabling investigations into the effects of different decoder choices and allowing for a more comprehensive analysis of DQI performance in the future.

\newcommand{\dist}{0.4cm}
\tikzset{
    umldiag/.style={
        node distance=\dist and \dist,
        box/.style={
            rounded corners=3pt,
            inner sep=3pt,
            fill=white,
            draw=darkgrey,
            align=left,
            font=\ttfamily,
            execute at begin node=\setlength{\baselineskip}{9pt}
        }
    },
}
\begin{figure*}
    \centering
        \begin{tikzpicture}[umldiag, baseline=(current bounding box.north)]
            \node[box] (maxconstraintsat) {
            \footnotesize MaxConstraintSat \\
            \scriptsize\color{darkgrey} variables: list[IntVar]\\
            \scriptsize\color{darkgrey} objectives: list[IntExpr]\\
            \scriptsize\color{darkgrey} constraints: list[IntConstraint]
            };
            \node[box, below=of maxconstraintsat, xshift=-1.75cm] (intexpr) {
            \footnotesize IntExpr \\
            \scriptsize\color{darkgrey} terms: dict[\\
            \scriptsize\color{darkgrey} \,\,\,\,IntMonomial: Fraction\\
            \scriptsize\color{darkgrey} ]
            };
            \node[box, right=of intexpr] (intconstraint) {
            \footnotesize IntConstraint \\
            \scriptsize\color{darkgrey} expr: IntExpr\\
            \scriptsize\color{darkgrey} relation: Relation\\
            \scriptsize\color{darkgrey} mod: int | None
            };
            \node[box, below=of intexpr, xshift=0.9cm] (operator) {
            \footnotesize IntMonomial \\
            \scriptsize\color{darkgrey} variables: list[ \\
            \scriptsize\color{darkgrey}\,\,\,\,tuple[IntVar, int] \\
            \scriptsize\color{darkgrey}]
            };
            \node[box, anchor=north east, xshift=-\dist] at (operator.west |- operator.north) (intvar) {
            \footnotesize IntVar \\
            \scriptsize\color{darkgrey} id: int\\
            \scriptsize\color{darkgrey} name: str\\
            \scriptsize\color{darkgrey} lower: int\\
            \scriptsize\color{darkgrey} upper: int
            };
            \node[box, anchor=north west, xshift=\dist] at (operator.east |- operator.north) (relation) {
            \footnotesize Relation (Enum)\\
            \scriptsize\color{darkgrey} EQUALS\\
            \scriptsize\color{darkgrey} DOES\_NOT\_EQUAL\\
            \scriptsize\color{darkgrey} LESS\_THAN\\
            \scriptsize\color{darkgrey} ...
            };
            \draw[-stealth, thick] (maxconstraintsat) -- (intexpr);
            \draw[-stealth, thick] (maxconstraintsat) -- (intconstraint);
            \draw[-stealth, thick] (intexpr) -- (intvar);
            \draw[-stealth, thick] (intexpr) -- (operator);
            \draw[-stealth, thick] (intconstraint) -- (relation);
            \draw[-stealth, thick] (intconstraint) -- (intexpr);
            \node[box, right=4cm of maxconstraintsat] (maxlinsat) {
            \footnotesize MaxLinSat \\
            \scriptsize\color{darkgrey} field: sage.GF\\
            \scriptsize\color{darkgrey} variables: list[LinsatVar]\\
            \scriptsize\color{darkgrey} constraints: list[LinsatConstraint]
            };
            \node[box, below=of maxlinsat, xshift=-2.2cm] (linsatexpr) {
            \footnotesize LinsatExpr \\
            \scriptsize\color{darkgrey} variables: list[LinsatVar]\\
            \scriptsize\color{darkgrey} coefs: list[sage.GFElement]
            };
            \node[box, right=of linsatexpr] (linsatconstraint) {
            \footnotesize LinsatConstraint \\
            \scriptsize\color{darkgrey} expr: LinsatExpr\\
            \scriptsize\color{darkgrey} rhs: LinsatRhs
            };
            \node[box, below=of linsatexpr] (linsatvar) {
            \footnotesize LinsatVar \\
            \scriptsize\color{darkgrey} id: int\\
            \scriptsize\color{darkgrey} name: str
            };
            \node[box, below=of linsatconstraint] (linsatrhs) {
            \footnotesize LinsatRhs \\
            \scriptsize\color{darkgrey} weights: dict[\\
            \scriptsize\color{darkgrey} \,\,\,\, sage.GFElement, int\\
            \scriptsize\color{darkgrey} ]
            };
            \draw[-stealth, thick] (maxlinsat) -- (linsatconstraint);
            \draw[-stealth, thick] (linsatconstraint) -- (linsatexpr);
            \draw[-stealth, thick] (linsatexpr) -- (linsatvar);
            \draw[-stealth, thick] (linsatconstraint) -- (linsatrhs);

            \begin{scope}[on background layer]
\node[
    draw=none,
    fill=lfd4lighter,
    fit=(maxlinsat)(linsatexpr)(linsatrhs)(relation.south west -| linsatrhs),
    label={[anchor=north,text=lfd4]north:\small Lower abstraction level: LINSAT constraints},
    yshift=4pt,
    inner ysep=12pt,       
    inner xsep=8pt,
    rounded corners=5pt
] (group1) {};
\node[
    draw=none,
    fill=lfd7lighter,
    fit=(maxconstraintsat)(intexpr)(intconstraint)(relation)(intvar),
    label={[anchor=north,text=lfd7]north:\small Higher abstraction level: Boolean/integer constraints/objectives},
    yshift=4pt,
    inner ysep=12pt,       
    inner xsep=8pt,
    rounded corners=5pt
] (group2) {};
\end{scope}
        \end{tikzpicture}
    \caption{Simplified class structure for the two abstraction levels of \framework: \texttt{MaxLinSat} for directly describing Max-LINSAT constraints and \texttt{MaxConstraintSat} for describing integer constraints and objectives as well as Boolean constraints.}
    \label{fig:class-diagram}
\end{figure*}

\section{Handling Limitations of DQI} \label{sec:weighted-max-linsat}
DQI suffers from two limitations with regards to the problem formulations described in \autoref{sec:problem-encoding}:
Firstly, it does not support weighted constraints;
secondly, it requires all sets $F_i$ to be of the same size.
Both limitations arise because the first step of DQI involves preparing a trial state that is a superposition of all possible error vectors of weight at most $\ell$.
This trial state is symmetric with respect to different error positions.
Due to this symmetry, DQI can only prepare the DQI state \eqref{eq:dqi-state} exactly if the objective function $f$ is also unbiased with respect to different constraints.
This results in symmetry requirements for the problem instance that are only met when all constraints have the same weight and all sets $F_i$ have the same size.

It is plausible that both symmetry requirements can be lifted by using an asymmetric trial state:
In a recent work on a generalised version of DQI, Bu~\etal \cite{Bu2026} showed that the DQI state can be prepared efficiently for Weighted Max-XORSAT if $\ell < d_\text{min} / 2$.
This is done by using an asymmetric matrix product state as the trial state, which has bond dimension $\ell + 1$ and can thus be prepared efficiently in time polynomial in $m$ and $\ell$.
This approach may be generalisable for any field $\mathbb{F}_q$ and for sets $F_i$ of different size, but this is not guaranteed.
Additionally, computing the expected number of satisfied constraints achieved by DQI is likely less efficient than in the symmetric case, potentially allowing for less precise performance evaluation.
Thus, our framework needs to address these symmetry limitations.

The most straightforward~(and, in general, only viable) option to handle weighted constraints is to duplicate them such that the number of copies of a constraint is proportional to its weight.
However, this creates two problems in turn:
First, large weights significantly increases the number of constraints.
Second, as explained in \autoref{sec:suitable-max-linsat-instances}, duplicated constraints heavily limit the approximation capabilities of DQI.
To mitigate the first issue, we can divide the weights by a common factor if it is shared by all constraint weights, reducing the number of required duplicates.
If the weights do not share a common factor, we can round them to a multiple of an integer $d$.
However, doing so alters the problem we are trying to solve, generally decreasing the solution quality of DQI.
Rounding thus works best if weights do not vary a lot between constraints.
To mitigate the second problem of duplicate constraints resulting in a minimum distance of $2$, we can employ dependency reduction techniques like the the gadget defined in \autoref{thm:gadget}.
If we repeat a constraint $w_k$ times, we can resolve the dependency by adding this gadget to $w_k - 1$ of the $w_k$ duplicates.

Finally, we need to handle $F_i$ sets of different sizes.
This can also be achieved via duplicate constraints along with dependency reduction techniques.
First, we find the greatest common divisor $d$ of all set sizes and split each $F_i$ into sets of size $d$.
This works especially well for integer constraints as outlined \autoref{sec:equality-inequality-constraints}.
For these types of constraints, we have some flexibility when choosing the sets $F_i$, since the field order does not usually fit the constraint's range exactly.
We can therefore increase the greatest common divisor by deliberately including values in the sets $F_i$ that should not be attainable without violating the variables' range constraints in such a way that the sizes of the $F_i$ have large common factors.
In some cases, this approach might still not be viable.
It thus remains an interesting open question whether the same-size restriction can be lifted, as was done for the weight restriction in \cite{Bu2026}.

\section{\framework} \label{sec:framework}
The main contribution of our work is \framework, an open source Python library to express various types of domain objectives and constraints, which are then converted into weighted or unweighted Max-LINSAT instances via the transformations depicted in \autoref{fig:transformations}.
\framework then estimates the approximation performance of both DQI and a series of classical algorithms on these transformed instances.

\begin{figure*}
    \centering
    \begin{tikzpicture}[
      node distance=0.5cm and 0.2cm,
      every node/.style={
          draw=lfd3,
          rounded corners=5pt,
          rectangle,
          inner xsep=4pt,
          inner ysep=-3pt
        }
    ]
        
    \node[draw=lfd4, thick] (colourable) {
        \begin{minipage}{0.26\textwidth}
            \begin{lstlisting}
# Max. 3-colourable subgraph
G = nx.Graph(...)
prob = MaxLinSat(GF(3))
x = [
  prob.new_var(f"x_{i}")
  for i in range(n)
]
for u, v in G.edges:
  prob.add_constraint(
    x[u] != x[v]
  )
            \end{lstlisting}
        \end{minipage}
    };
    \node[draw=lfd7, thick, left=0.6cm of colourable.north west, anchor=north east] (vertex-cover) {
        \begin{minipage}{0.28\textwidth}
            \begin{lstlisting}
# Min. vertex cover
G = nx.Graph(...)
prob = MaxConstraintSat()
x = [
  prob.new_binary_var(f"x_{i}")
  for i in range(n)
]
     
prob.add_objective(
  sum(x_i for x_i in x),
  minimize=True,
)
for u, v in G.edges:
  prob.add_boolean_constraint(
    x[u] | x[v], weight=2
  )
            \end{lstlisting}
        \end{minipage}
    };
    \node[draw=lfd7, thick, left=of vertex-cover] (knapsack) {
        \begin{minipage}{0.35\textwidth}
            \begin{lstlisting}
# Knapsack
items = [Item(w=3, val=5), ...]
prob = MaxConstraintSat()
x = [
  prob.new_binary_var(f"x_{i}")
  for i in range(len(items))
]
prob.add_constraint(
  sum(
    x_i * i.w for x_i,i in zip(x,items)
  ) <= W,
  weight=2 * sum(i.val for i in items)
)
prob.add_objective(sum(
  x_i * i.val for x_i,i in zip(x,items)
))
            \end{lstlisting}
        \end{minipage}
    };

    \coordinate (center) at ($(knapsack.west)!0.5!(colourable.east)$);
    
    \node[anchor=north, thick, yshift=-0.5cm] (solver) at (center |- vertex-cover.south) {
        \begin{minipage}{0.45\textwidth}
            \begin{lstlisting}
print(SimAnnealSolver(prob).get_solution_quality())
dqi = Dqi(prob)
for l in range(0, n):
  print(dqi.estimate_solution_quality(l))
            \end{lstlisting}
        \end{minipage}
    };

    \tikzset{
        arrow/.style={
            thick,
            -stealth,
            draw=lfd3,
            rounded corners=3pt
        }
    }

    \draw[arrow] (colourable) |- (solver);
    \draw[arrow] (vertex-cover.south) -- (solver.north -| vertex-cover.south);
    \draw[arrow] (knapsack) |- (solver);
    \end{tikzpicture}
    \caption{
        Problem encoding and solution quality estimation in \framework:
        User-defined problem specifications supports two abstraction levels:
        in this example, higher-level \texttt{MaxConstraintSat} is used for \emph{knapsack} and \emph{minimum vertex cover};
        lower-level abstraction \texttt{MaxLinSat} is used for \emph{maximum 3-colourable subgraph}.
        DQI and classical solvers operate on both.
        For this, \texttt{MaxConstraintSat} instances are automatically transformed into Max-LINSAT.
    }
    \label{fig:code-problems}
\end{figure*}

\subsection{Abstractions for Max-LINSAT Constraints}
The general software architecture of our framework is visualised in \autoref{fig:class-diagram}.
\framework provides two main ways for users to describe constraints, which differ in their level of abstraction.
Using the lower-level \texttt{MaxLinSat} class, users can encode $=$, $\neq$ and $\in$ Max-LINSAT constraints along with their weights directly, allowing for a more manual control of the precise problem formulations.
For this, users can choose the finite field $\mathbb{F}_q$ they want to operate in.
For the field operations we use \emph{Sage} \cite{Sage2025}, which provides implementations for arbitrary finite fields via a unified interface to several efficient backends (\eg Refs.~\cite{Pari,Ntl,Givaro}).
Variables are created via \texttt{MaxLinSat.new\_var}.
They always represent elements in $\mathbb{F}_q$.
Using Python's operator overloading capabilities, constraints can then be described via intuitive syntax as shown in \autoref{fig:code-problems}.
Equalities and inequalities are rearranged automatically to conform to the Max-LINSAT formalism (\autoref{def:max-linsat}).
For typical constraint patterns beyond $=$, $\neq$ and $\in$ constraints, such as Boolean constraints, \texttt{MaxLinSat} supports so-called \emph{gadgets}.
A gadget can be thought of as a function mapping a list of $n$ variables to a set of $m$ constraints.
It is defined by a miniature Max-LINSAT instance, similar to the one outlined in \eqref{eq:and}.
Users can define their own reusable gadgets.
We also provide a small library of gadgets for various Boolean constraints as well as a mechanism to generate a gadget based on a given truth table.
This mechanism relies on exhaustively searching all possible sets of constraints.
Using it is thus only feasible for small gadgets and field orders. 

\texttt{MaxLinSat} automatically merges constraints with the same left-hand side into a single constraint.
The right-hand side of a constraint is represented as a hash map mapping field elements to weights.
For example, merging the constraint $x + y = 0$ with weight $1$ and the constraint $x + y \in \{0, 1\}$ with weight $2$ results in a constraint that internally represents the right-hand side as the hash map \texttt{\{0: 3, 1: 2\}}.
Here, the entry \texttt{0: 3} refers to the fact that $x + y$ evaluating to $0$ would contribute a total weight of $3$ to the Max-LINSAT objective function.
Only integer weights are supported.
Each \texttt{MaxInstance} instance can thus be interpreted as both a weighted and, via constraint duplication, an unweighted Max-LINSAT problem.
This allows solvers that do not support weighted constraints, such as standard DQI, to be applied to any \texttt{MaxLinSat} instance.

\subsection{Abstractions for Integer and Boolean Constraints}
The second abstraction layer provided by \framework is represented by the class \texttt{MaxConstraintSat}.
It allows for the specification of various types of (potentially weighted) Boolean, integer and modular constraints.
The resulting constraint problem can then be transformed into a \texttt{MaxLinSat} instance, via a call to \texttt{to\_max\_lin\_sat}.
Users can create variables with the \texttt{new\_var} method.
As shown in \autoref{fig:class-diagram}, these variables can be combined with other variables and scalar constants, represented by the Fraction class, to form expressions (\texttt{IntExpr}) via addition and multiplication.
The set of representable expressions is therefore equal to the set of multi-variate polynomials.
An \texttt{IntExpr} can be used directly as objective to be maximised or minimised, or, along with a relation ($=, \neq, \leq, <, \geq, >$), as part of an \texttt{IntConstraint}.
When describing a modular constraint, only $=$ and $\neq$ are supported since there is no well-defined notion of order in $\mathbb{Z}_n$.
By convention, the right-hand side of any \texttt{IntConstraint} is $0$ and expressions are automatically rearranged accordingly.

\texttt{MaxConstraintSat} has no explicit notion of a binary variable.
At creation, the value range of each \texttt{IntVar} is specified via integer lower and upper bounds.
A variable is considered binary if its lower bound is $0$ and its upper bound is $1$.
In addition to \texttt{+}, \texttt{-} and \texttt{*}, binary variables also support the Python Boolean operators \texttt{\~}, \texttt{\&}, \texttt{|} and \texttt{\^}, representing NOT, AND, OR and XOR respectively.
These Boolean operators are automatically converted to addition and multiplication.
For example, $a \vee b$ is equivalent to $a + b - a b$.
This means that, internally, a Boolean constraint is handled identically to an integer expressions and is therefore added as an objective and not as a constraint since it includes no relation such as $=$.

The transformation of \texttt{MaxConstraintSat} to \texttt{MaxLinSat} happens in multiple passes.
In the first pass, polynomial objectives and constraints as well as linear objectives are transformed into linear constraints, as outlined in Sections \ref{sec:linear-objectives} and \ref{sec:qubos-pubos}.
As explained in \autoref{sec:qubos-pubos}, a pseudo-Boolean function over $k$ variables can lead to up to $2^k$ Max-LINSAT constraints.
To combat this, the transformation splits up higher order terms introducing auxiliary variables.
This reduces monomial degrees, and thus the number of Max-LINSAT constraints required.
By default, terms of degree at least four are split up, but this behaviour can be changed by the user.
Auxiliary variables are also introduced to represent higher order terms within constraints.

After the first pass, only linear constraints remain.
These are transformed into modular constraints.
If needed, constraints are scaled such that they only contain integer coefficients.
Then, the smallest prime $p$ is determined so all constraints can be encoded modulo that prime and the integer constraints are converted into constraints over $\mathbb{F}_p$.
Next, for each variable $x_i$ with lower bound $l_i$ and upper bound $u_i$ with $u_i - l_i + 1 < p$, a constraint $x_i \in \{l_i, \dots, u_i\}$ is added.
Finally, we end up with three types of constraints: user-defined constraints, auxiliary variable constraints for higher-order terms and variable range constraints.
The latter two types are automatically assigned weights that are large enough to give them priority over user-defined constraints, unless the user specifies otherwise.

\subsection{Solvers and Decoders}
Our software framework implements a DQI solver and four classical solvers for Max-LINSAT instances.
Actually executing DQI on a real or simulated quantum computer is impossible with current hardware, for industrial-scale instances.
Instead of performing the algorithm, the DQI solver therefore computes the expected solution quality in terms of the expected number of satisfied constraints using the closed expressions derived by Jordan \etal~\cite{Jordan2025}.
The user can choose (a) the degree $\ell$ of the polynomial $P$ (which determines how strongly the final superposition prepared by DQI is biased towards the optimal solution), and (b)
the classical decoder used during syndrome decoding.
We provide four general-purpose decoding algorithms (belief propagation, information set decoding, lookup table syndrome decoding and nearest-neighbour decoding) as well as an interface to add other decoders. We use the \texttt{ldpc} package \cite{Ldpc2022} for belief propagation and Sage \cite{Sage2025} for the other decoders.
Note that the lookup table syndrome decoder and the nearest-neighbour decoder correct all theoretically correctable errors but they have exponential runtime in $\ell$.
Nevertheless, they can efficiently provide provable upper bounds on problem instances with small minimum distance making them useful for ruling out potential problem candidates for DQI.
If the user does not specify $\ell$ or the decoder, both are selected automatically depending on the order of the field $\mathbb{F}_q$ and the approximate minimum distance of the linear code $\Code$ dual to the given Max-LINSAT instance.
If not otherwise specified, \framework also selects the optimal coefficients of $P$ automatically by solving an eigenvalue problem as described in~\cite{Jordan2025}.

If $\ell$ is smaller than half the minimum distance of $\Code$, we can efficiently compute the expected number of satisfied constraints exactly.
Otherwise, computing this quantity either takes exponential time or can only be done approximately by sampling from the distribution of possible error vectors.
In \framework, the user can select if they want an exact or approximated value and, in the latter case, how many errors should be sampled.
This give a more accurate approximation at the expense of compute time.

Along with the DQI solver, we provide interfaces for four classical methods:
a brute force solver for finding the optimal solution,
the constraint programming solver from Googles OR-Tools \cite{Ortools2025},
a simulated annealing solver using the \texttt{simanneal} package \cite{Simanneal2019} and,
finally, Prange's algorithm.
Prange's algorithm finds an approximate solution by randomly selecting a set of $n$ constraints, whose left-hand sides are linearly independent.
The resulting system of equations is then solved exactly using Gaussian elimination.
If we assume the $m - n$ remaining constraints to be random and independent from the $n$ selected, this satisfies $n + (m - n) / q$ equations in total, on average.
Prange's algorithm is noteworthy as it is the currently best known classical algorithm for the problem for which \cite{Jordan2025} showed apparent quantum advantage with DQI.

\section{Related Work} \label{sec:related-work}
DQI was proposed by Jordan \etal~\cite{Jordan2025} as a quantum algorithm that finds approximate solutions to a combinatorial optimisation problem via so-called \emph{Regev reductions}.
This type of reduction is based on the relationship via a Fourier transform between decoding problems in two codes dual to each other.
Regev reductions have has been studied before DQI, both framed in the context of lattice problems \cite{Aharonov2003,Aharonov2005,Regev2009} and coding theory \cite{Chen2022,Yamakawa2024,Debris2024}.
There have been several works building upon the original DQI paper \cite{Chailloux2025,Khattar2025,Schmidhuber2025,Gu2025,Marwaha2025}.
In particular, we want to highlight the works of Anschuetz \etal~\cite{Anschuetz2025} and Kramer \etal~\cite{Kramer2026}, who provide
compelling evidence indicating that quantum advantage with DQI is not achievable on unstructured instances.
We also want to highlight the work of Parekh~\cite{Parekh2025}, who ruled out quantum advantage for Max-Cut with DQI when using perfect decoding algorithms.
These results underscore the importance of identifying suitable problem instances for utilising DQI in industrial applications.
DQI has been successfully applied to algebraic problems derived from known good error-correcting codes \cite{Jordan2025, Khattar2025, Gu2025, Chailloux2025, Hillel2025}.
However, to the best of our knowledge, Sabater \etal~\cite{Sabater2025} published the first investigation of DQI on an industrial use case.
They apply DQI to a binary integer linear program and but find that, for their problem formulation, DQI is not competitive against state-of-the-art classical approaches.

Software-aided transformations of domain problems~\cite{Yue2023,Carbonelli2024} into mathematical formalisms have been broadly studied for other problem formulations:
For instance, interfaces to encode various types of constraints are provided by many state-of-the-art mixed-integer linear programming solvers including Gurobi, Google OR-Tools or the SCIP Optimisation Suite~\cite{Gurobi2026,Ortools2025,Hojny2025}.
In addition, there has been extensive research on the development of domain-specific languages for problem encoding \cite{Nethercote2007,Dunning2017,Nicholson2017}.
In quantum optimisation, QUBOs the most widely studied problem formulation.
Multiple efforts provide tools to encode various constraints as QUBOs, which, in turn, can then be used in quantum approaches such as QAOA or quantum annealing~\cite{mauerer:26:qdata,Zaman2021,Lobe2023,Quetschlich2023,Rovara2024}.

\section{Conclusion and Outlook} \label{sec:conclusion}
We presented \framework, an open source software framework that allows users to study the performance of the DQI algorithm on broad classes of industrial combinatorial optimisation problems.
We demonstrated that high-level user-specified descriptions of predominant constraint and objective types can be encoded into problem formulations directly compatible with DQI via a series of transformations.
Based on the limitations of DQI, we derived a series of guidelines on how to determine suitable problem instances and on which types of constraints can degrade the performance of the algorithm.

Our work shows that many transformations introduce inefficiencies in the form of auxiliary variables, an increased number of constraints or linear dependencies in the constraint matrix.
These inefficiencies can limit the performance of DQI, sometimes significantly.
We demonstrate that, in principle, mitigating some of these inefficiencies is possible to a certain extent.
To significantly expand the class of problems where DQI is applicable, further research into more efficient transformations and more sophisticated dependency mitigation techniques is required.
Developing generalisations of DQI for weighted constraints or heterogeneous right-hand side sets could also extend its applicability considerably by allowing for more efficient encoding of many practical use cases.
We envision \framework as a starting point for a community-driven standard that provides a unified interface for a wide range of possible transformation and encoding strategies.

\newcommand{\WM}{\censor{WM}\xspace}
\newcommand{\hta}{\censor{High-Tech Agenda Bavaria}}
\small{
\textbf{Acknowledgements}
This work was supported by the German Federal Ministry of Research, Technology and Space (BMFTR), funding program ‘Research Program Quantum Systems’, grant number 13N16092. We acknowledge partial support
by the German Research Foundation, grant MA 9739/1-1,
by the European
Union (Project Reference 101083427) and the European Funds for Regional Development (EFRE) (Project
Reference 20-3092.10-THD-105).
\WM acknowledges support by the \hta.
}

\newpage\bibliographystyle{IEEEtran}
\bibliography{references}

\end{document}